\def\BState{\State\hskip-\ALG@thistlm}
\newtheorem{define}{Definition}
\newtheorem{theorem}{Theorem}
\newtheorem{lemma}{Lemma}
\newcommand{\Rc}{\mathcal R}
\newcommand{\R}{\mathbb R}
\newcommand{\Sc}{\mathcal{S}}
\newcommand{\N}{\mathcal{N}}
\newcommand{\D}{\mathcal{D}}
\newcommand{\V}{\mathcal{V}}
\newcommand{\E}{\mathcal{E}}
\newcommand{\Le}{\mathcal{L}}
\newcommand{\A}{\mathcal{A}}
\newcommand{\eps}{\epsilon}
\newcommand{\Z}{\mathbb{Z}}
\newcommand{\bmx}[1]{\begin{bmatrix}#1\end{bmatrix}} 
\newcommand{\nrm}[1]{\left \lVert#1\right \rVert} 
\DeclarePairedDelimiter{\ceil}{\lceil}{\rceil}
\DeclarePairedDelimiter{\floor}{\lfloor}{\rfloor}
\DeclarePairedDelimiter{\abs}{\lvert}{\rvert}
\newcommand{\rarr}{\rightarrow} 
\let\oldceil\ceil
\def\ceil{\@ifstar{\oldceil}{\oldceil*}}
\let\oldfloor\floor
\def\floor{\@ifstar{\oldfloor}{\oldfloor*}}
\let\oldnorm\norm
\def\norm{\@ifstar{\oldnorm}{\oldnorm*}}
\let\oldabs\abs
\def\abs{\@ifstar{\oldabs}{\oldabs*}}
\newtheorem{Exp conv}[lemma]{Lemma}
\newtheorem{Conv L2}[lemma]{Lemma}
\newtheorem{FT Conv L2}[lemma]{Lemma}
\newtheorem{FT Conv L1}[lemma]{Lemma}
\newtheorem{wij}[theorem]{Theorem}
\newtheorem{Conv L1}[theorem]{The
orem}
\newtheorem{Conv Lp}[theorem]{Theorem}
\newtheorem{FT Conv Lp}[theorem]{Theorem}
\begin{document}

\title{\LARGE \bf  Finite-Time Resilient Formation Control with Bounded Inputs}


\author{James Usevitch, Kunal Garg, and Dimitra Panagou
\thanks{The authors are with the Department of Aerospace Engineering, University of Michigan, Ann Arbor; \texttt{usevitch@umich.edu}, \texttt{kgarg@umich.edu}, \texttt{dpanagou@umich.edu}.}
\thanks{The authors would like to acknowledge the support of the Air Force Office of Scientific Research under award number FA9550-17-1-0284, and the Automotive Research Center (ARC) in accordance with Cooperative Agreement W56HZV-14-2-0001 U.S. Army TARDEC in Warren, MI, and the Award No W911NF-17-1-0526. This work has been funded by the Center for Unmanned Aircraft Systems (C-UAS), a National Science Foundation Industry/University Cooperative Research Center (I/UCRC) under NSF Award No. 1738714 along with significant contributions from C-UAS industry members. }
}

\maketitle
\thispagestyle{empty}
\pagestyle{empty}

\acrodef{wrt}[w.r.t.]{with respect to}
\acrodef{apf}[APF]{Artificial Potential Fields}
\begin{abstract}
In this paper we consider the problem of a multi-agent system achieving a formation in the presence of misbehaving or adversarial agents. We introduce a novel continuous time resilient controller to guarantee that normally behaving agents can converge to a formation with respect to a set of leaders. The controller employs a norm-based filtering mechanism, and unlike most prior algorithms, also incorporates input bounds. In addition, the controller is shown to guarantee convergence in finite time. A sufficient condition for the controller to guarantee convergence is shown to be a graph theoretical structure which we denote as Resilient Directed Acyclic Graph (RDAG). Further, we employ our filtering mechanism on a discrete time system which is shown to have exponential convergence. Our results are demonstrated through simulations.
\end{abstract}

\IEEEpeerreviewmaketitle

\section{Introduction}
\label{intro}

The study of resilient control in the presence of adversarial agents is a rapidly growing field. An ever-growing amount of cyber attacks has led to increasing attention on algorithms that guarantee safety and security despite the influence of faults and malicious behavior. Controllers that protect against adversarial actions are especially critical in distributed systems where agents may have limited power, computational capabilities, and knowledge of the system as a whole.

The problem of agents achieving formation with respect to a leader or set of leaders has been well-studied in the literature under the assumption that all agents are behaving (see \cite{Oh2015survey} and its references). However, it is well known that the introduction of one faulty or misbehaving agent can disrupt the performance of the entire network. The literature has addressed this problem when agents have simply crashed, have actuator or sensor faults, or have malicious intent \cite{Agmon2006fault,Park2018robust,Defago2006fault}. Much work remains to be done in this area however, especially when the misbehaving agents have malicious intent rather than simply being subject to faults.

A certain group of resilient consensus algorithms based on a filtered-mean or median based approach have gained traction recently in the literature. These algorithms include the W-MSR \cite{LeBlanc_2013_Res}, ARC-P \cite{LeBlanc_2013_Res_Continuous}, SW-MSR \cite{Salda2017}, DP-MSR \cite{Dibaji2017resilient}, LFRE \cite{Mitra2016secure}, and MCA \cite{Zhang2012c} algorithms, which have all been used for resilient consensus. There are a few limitations to these prior results. One limitation is that no upper bound is assumed on agents' maximum inputs. Many systems with agents coming to consensus on physical states are modeled as having such a bound. In addition, the ARC-P algorithm has been shown to have asymptotic convergence, but to the best of authors' knowledge a precise convergence rate has not been proven. 

Finite-time consensus has been a popular field of research recently \cite{Liu2016nonsmooth, Wang2016fault, Tong2018finite, Hong2018finite,zheng2012finite} and little prior work has addressed this topic in the case of resilient controllers. The consensus algorithm in \cite{Franceschelli2017} does consider a resilient algorithm with finite time convergence and bounded inputs. However, it considers undirected graphs where all misbehaving agents must be connected only to agents which are guaranteed to be cooperative. The analysis in this paper considers directed graphs and assumes a different adversary model, where all agents are vulnerable to attacks and that the set of adversaries is $F$-local \cite{Leblanc2012consensus}. As we will show, our method does not require the knowledge of a set of agents invulnerable to misbehavior.

Our contributions are as follows: (i) We introduce a novel continuous finite-time controller that allows agents to achieve formations in the presence of adversarial agents. The controller employs a novel filtering mechanism based on the norm of the difference between agents' states. (ii) We prove that this controller guarantees convergence with bounded inputs. (iii) We define novel conditions for the filtering timing and input weights which ensure that agents can remain in formation even with a dwell time in the filtering mechanism. (iv) We show that the norm-based filtering and bounded input elements of our continuous-time controller can be used in a similar resilient discrete-time system, which is proven to have exponential convergence.

Our paper is outlined as follows: in Section \ref{sec:not} we outline our notation and give the problem formulation; in Section \ref{sec:cont sys} and \ref{sec:disc sys} we present our main results on resiliently achieving formation in continuous and discrete time, respectively; in Section \ref{sec:sim} we present simulations demonstrating our results; and our conclusions and thoughts on future work are summarized in Section \ref{sec:conc}.

\section{Modeling and Problem Formulation}\label{sec:not}
\subsection{Notation}
We denote a directed graph (digraph) as $\mathcal{D} = (\mathcal{V},\E)$, with $\V = \{1,...,n\}$ denoting the vertex set, or agent set, of the graph and $\E$ denoting the edge set of the graph. A directed edge is denoted as ${(j,i) \in \E}: i,j \in \V$, which implies that $i$ is able to sense or receive information from agent $j$. Note that $(i,j) \neq (j,i)$. We say that agent $j$ is an in-neighbor of $i$ and $i$ is an out-neighbor of $j$. The set of in-neighbors of agent $i$ is denoted $\V_i = \{j: (j,i) \in \E)\}$. Three subsets of $\V$ are considered in this paper: leader agents $\Le$, adversarial agents $\A$, and normal agents $\N$. These subsets will be described in more detail in section \ref{sec:probdef}. We denote $\A_i = \V_i \bigcap \A$, i.e. the set of adversarial agents in the in-neighbour set of agent $i$. The in-neighbors which are \emph{not} filtered out are denoted $\Rc_i \subseteq \V_i$. For brevity of notation, we will denote $\Rc_i^\N = \Rc_i\setminus (\A_i\bigcap\Rc_i)$ and $\Rc_i^\A = \A_i\bigcap\Rc_i$, which implies $\Rc_i = \Rc_i^\N\bigcup \Rc_i^\A$. We also denote $R_i \triangleq |\Rc_i(t)|$. The cardinality of a set $S$ is written as $|S|$, the set of integers as $\mathbb{Z}$, the set of integers greater than or equal to 0 as $\mathbb{Z}_{\geq 0}$, the natural numbers as $\mathbb{N}$ and the set of non-negative reals as $\mathbb R_+$. Finally, $\nrm{\cdot}$ denotes any $p$-norm defined on $\R^n$. The control protocols in this paper involve a process in which each agent $i \in \N$ filters out a subset of its in-neighbors $\V_i$. The details are given in sections \ref{subsec: cont filt} and \ref{subsec: disc filt}.

\subsection{Problem Definition}
\label{sec:probdef}

Consider a time-invariant digraph $\D = (\V,\E)$ of $n$ agents with states $\bm p_i \in \R^n$. Each agent $i \in \V$ has the system model
\begin{alignat}{2}
    \dot{\bm p}_i(t) &= \bm u_i(t)& \text{(Continuous Time)} \\
    \bm p_i[t+1] &= \bm p_i[t] + \bm u_i[t] &\hspace{2em} \text{(Discrete Time)} 
\end{alignat}
where $\bm u_i[t] , \bm u_i(t) \in \R^n$ are the discrete and continuous inputs to agent $i$, which will be explained in sections \ref{sec:disc sys} and \ref{sec:cont sys} respectively.
We assume that there exists a subset of the agents $\A \subset \V$ that is adversarial. These agents apply arbitrary or malicious inputs at each time $t$ and for each $k\in \A$:
\begin{alignat}{2}
    \dot{\bm p}_k &= \bm f_{k,m}(t)\ &\hspace{2em} \text{(Continuous Time)} \label{eq:mal cont}\\
    \bm p_k[t+1] &= \bm p_k[t] + \bm f_{k,m}[t]\  &\hspace{2em} \text{(Discrete Time)} \label{eq:mal disc}
\end{alignat}
Similar to prior literature, we assume that $\A$ is an \emph{$F$-local} set, meaning that for any $i \in (\V \backslash \A),\ |\V_i \bigcap \A| \leq F$. 

There is much prior literature on formation control problems involving a set of leaders to which the rest of the network converges. We assume that a subset of the agents $\Le \subset \V$ are designated to behave as leaders. However, these leaders are not invulnerable to attacks, implying $(\Le \bigcap \A)$ may possibly be nonempty. Any nodes which are neither leaders nor adversarial are designated as normal nodes $\N \subset \V$. In all, $\N \bigcup \Le \bigcup \A = \V$.

We assume that prescribed constant formation vectors $\bm \xi_i \in \R^n$ have been specified for these agents. Each $\bm \xi_i \in \R^n$ represents agent $i$'s desired formational offset from a group reference point. The formation offsets of the entire network is written as $\bm \xi = \bmx{ \bm \xi_1^T & \ldots & \bm \xi_n^T}^T$. As outlined in chapter 6 of \cite{Mesbahi2010}, we define the variable $\bm \tau_i$ as $\bm \tau_i = \bm p_i - \bm \xi_i$
If non-adversarial agents come to formation on their values of $\bm \tau_i$, i.e. $\nrm{\bm \tau_i - \bm \tau_j} \rarr 0\ \forall i,j \in (\Le \bigcup \N) \backslash \A$
then they have achieved formation. The behaving leaders are assumed to be maintaining their $\bm \tau$ values at some arbitrary point $\bm \tau_L$. This is the goal of this paper: to design a control protocol so that all the normal behaving agents can come to formation at $\bm \tau_L$.  We assume that each agent $i$ is able to obtain the vectors $\bm \tau_j - \bm \tau_i$ for all $j \in \V_i$, i.e. each relative vector between $\bm \tau_i$ and $\bm \tau_j$. Agent $i$ either measures this vector via on-board sensors or calculates it by receiving transmitted messages from each $j \in \V_i$. 
Adversarial agents may attempt to prevent their normal out-neighbors from coming to formation by either physically misbehaving in the former case, or by sending false values of its $\bm \tau$ value in the latter. In the latter case they may be Byzantine \cite{LeBlanc_2013_Res_Continuous,Lamport1982} in the sense that they are able to send different $\bm \tau$ values to different out-neighbors at any given time instance or time step. As outlined in \cite{LeBlanc_2013_Res_Continuous}, since in the continuous time case each normal agent will have continuous state trajectories, any discontinuity in an adversarial agent's transmitted signal could expose its misbehavior to the network. Hence we assume that in the continuous time case, any signal $\bm \tau_k(t)$ or $\bm p_k(t)$ received by any normal agent $i \in \N$ from any adversary $k \in \A$ is continuous. The assumption of continuity of $\bm \tau_k(t)$ is also made for the case where agents make on-board measurements.

In this paper we consider two settings: a continuous time (Section \ref{sec:cont sys}) and a discrete-time (Section setting \ref{sec:disc sys}). In each of the case, we describe the control protocol, distance-based filtering algorithm and convergence analysis. 

\subsection{Graph Theoretical Conditions}
Our method employs a graph-theoretical structure which we call a Resilient Directed Acyclic Graph (RDAG). This structure is a special case of a class of graphs called Mode Estimated Directed Acyclic Graphs (MEDAGs) \cite{mitra2018}, and is defined as follows:

\begin{define}
    A digraph $\D = (\V,\E)$ is a \emph{Resilient Directed Acyclic Graph} (RDAG) with parameter $r$ if it satisfies the following properties for an integer $r \geq 0$:
    \begin{enumerate}
        \item There exists a partitioning of $\V$ into sets $\Sc_0, \ldots, \Sc_m \subset \V,\ m \in \Z$ such that $|\Sc_j| \geq r$ for all $0 \leq j \leq m$.
        \item For each $i \in \Sc_j,\ 1 \leq j \leq m$, $\V_i \subseteq \bigcup_{k=0}^{j-1} S_k$ 
    \end{enumerate}
\end{define}

Intuitively, an RDAG is a graph defined by successive subsets of agents $\Sc_j$. Agents in each subset only have in-neighbors from preceding subsets. The purpose of an RDAG is to introduce enough edge redundancy to ensure the existence of an unfiltered directed path of behaving nodes from the leaders to each normal agent. This can be achieved by designating all agents in the set $\Sc_0$ to behave as leaders, i.e. $\Sc_0 = \Le$. In our analysis, we consider RDAGs with parameter $r \geq 3F+1$. As we will show, this condition will guarantee that normal agents applying our controllers will converge to the leaders. The existence of an RDAG graph structure does not guarantee that normal agents are able to identify adversarial agents. Rather, the edge redundancy guarantees that each normal agent has enough behaving in-neighbors to still achieve formation under the proposed controllers. Methods have been introduced by which RDAGs can be constructed from existing graph topologies, even in the presence of adversaries (\cite{Mitra2016secure}). In particular, an RDAG can be constructed from a graph that is strongly robust with respect to a subset $\Sc \subset \V$. An example of such a graph is a $k$-circulant graph \cite{Usevitch2018res}. 

\section{Continuous-time System}\label{sec:cont sys}
\subsection{Filtering Algorithm and Control Law}\label{subsec: cont filt}

In the continuous time setting, each agent applies Algorithm \ref{alg:cont} at every time instance $t$.



\begin{algorithm}
\caption{Continuous-Time Filtering}\label{alg:cont}
\begin{algorithmic}
    \Procedure{UpdateFilteredList}{}
        \State Calculate $\tau_{ij} = \nrm{\bm \tau_j - \bm \tau_i}\ \forall j \in \V_i$ 
        \If{$t = m \eps_d,\ m \in \Z_{\geq 0}$}
            \State Sort $\tau_{ij}$ values such that $\tau_{ij_1} \geq \ldots \geq \tau_{ij_{|\V_i|}}$
            \State $\Rc_i(t) \leftarrow \{j : \tau_{ij} \in \{\tau_{ij_{F+1}}, \ldots, \tau_{ij_{|\V_i|}} \} \}$
        \EndIf
    \EndProcedure
\end{algorithmic}
\end{algorithm}

The dynamics of continuous time $\bm \tau(t)$ are given as:
\begin{align}\label{cont dyn}
    \dot{\bm\tau}_i(t) = \dot{\bm p}_i(t) - \dot{\bm \xi}_i(t) =\bm u_i(t).
\end{align}
We assume that the speed of each agent $i$ is bounded above by $u_M$, i.e. $\|\bm u_i(t)\| \leq u_M$ for all $t\geq 0$. Under this constraint, the saturation function is defined as
\begin{align}
    \sigma_i(t) & = \min\{\|\bm u_i^p(t)\|, u_M\},\\
    \bm u_i^p(t) &= \sum_{j\in\Rc_i(t)} w_{ij}(t) (\bm \tau_j(t)-\bm \tau_i(t))\|\bm \tau_j-\bm \tau_i\|^{\alpha-1},
\end{align}
where $0<\alpha<1$. To simplify the notation, define the term $\gamma_i(t) = \frac{\sigma_i(t)}{\|\bm u_i^p\|}$. With this saturation function\footnote{For all $t\geq 0$, $0 \leq \gamma_i(t)\leq 1$. Note that if the distances of agent from its in-neighbours $j\in \Rc_i$ are finite, then $\gamma_i(t)$ is strictly positive.}, inspired from the control law used in \cite{wang2010finite} and using results from \cite{kgarg2018acc}, we define the continuous time control law as:
\begin{align}\label{cont control}
    \bm u_i(t) = \sum_{j \in \mathcal{R}_i(t)} \gamma_i(t) w_{ij}(t)(\bm \tau_j(t) - \bm \tau_i(t))\|\bm \tau_j(t) -\bm \tau_i(t) \|^{\alpha-1}
\end{align}
where $0<\alpha<1$. It can be readily verified from \eqref{cont control} that $\|\bm u_i(t)\| \leq u_M$ for all $t\geq 0$ and that the control input goes to 0 as agent $i$ goes to its equilibrium \footnote{As $\bm \tau_j\rightarrow\bm \tau_i$, term $(\bm \tau_j - \bm \tau_i)\|\bm \tau_j-\bm \tau_i\|^{\alpha-1} \rightarrow 0$ for $\alpha>0$}. 
Note that for $\alpha = 1$, the control law \eqref{cont control} is same as the traditional formation control law (see \cite{Olfati-Saber2004} for example), while for $\alpha = 0$, we get a control law similar to the one introduced in \cite{chen2011finite}. We make use of this type of a controller to not only ensure that $\bm \tau_i$ converges to $\bm \tau_L$, but does so in finite time.

As opposed to \cite{LeBlanc2017}, this protocol is designed such that agents do not update their filtered list  $\Rc_i(t)$ at every time instance $t$, but instead only at time instances $t_1,t_2,t_3, ...$ while keeping it constant during the interval $(t_l,t_{l+1})$. Each of these intervals have constant length, i.e. $t_{l+1}-t_l = \epsilon_d$ for all $l\in \{1,2,3,...\}$ where $\epsilon_d>0$ is a small, positive constant. 
The weights $w_{ij}(t)$ for all $i \in \N$  are designed such that malicious agents are not able to exploit this behavior of $\Rc_i(t)$. Let $\Omega_i(t)$ be the set of in-neighbour agents whose $\bm \tau$ vectors are NOT equal to that of agent $i$, i.e. $\Omega_i(t) = \{j \in \V_i : \nrm{\bm \tau_j - \bm \tau_i} > 0 \}$. Then for all $i \in \N$, we define the control weights $w_{ij}(t)$ for all $j \in \Rc_i(t)$ as 
\begin{align}\label{wij omega}
    w_{ij} (t)=  \left\{
	\begin{array}{lc}
	 0, & \hbox{$|\Omega_i(t)| 
	\leq F$,}\\
	\frac{1}{R_i}, & \hbox{$|\Omega_i(t)| > F$.} \\
	\end{array}
	\right.
\end{align}

To the authors' best knowledge, this choice of control weights have never been introduced in the prior literature. Intuitively, the scheme implies that each normal agent $i$ will have a velocity of zero if its $\bm \tau$ is co-located with the $\bm \tau$ of all but at most $F$ of its in-neighbors. We impose this constraint to ensure that when all normal agents' $\bm \tau$ values have converged to $\bm \tau_L$, the malicious agents are not able to perturb them away from $\bm \tau_L$ during the dwell time. This could happen, for example, if for some $i \in \N$, $\nrm{\tau_i - \tau_k} = 0$ at all $t = m\eps_d$ and $\nrm{\tau_i - \tau_k} > 0$ for time $ t\in (m\eps_d, (m+1)\eps_d)$, where $k \in \A_i$, $m \in \Z_{\geq 0}$. Since $\Rc_i(t)$ is constant for each $t \in [m\eps_d, (m+1)\eps_d)$, the malicious agents would not be filtered out by agent $i$. The properties we impose on the weights prevent the malicious agents to steer the normal agents away during such period.

\begin{wij}\label{wij}
For each agent $i\in \N$, $|\Omega_i(t)|\leq F$ (or, $w_{ij}(t) = 0 \; \forall \; j\in \Rc_i$) for all $t\geq t_i$, if and only if $\bm \tau_i(t) = \bm \tau_L$ for all $t\geq t_i$, for some time $t_i$.
\end{wij}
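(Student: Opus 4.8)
The plan is to prove the two implications separately, after first noting that the parenthetical equivalence in the hypothesis is immediate: by the weight rule \eqref{wij omega}, $w_{ij}(t)=0$ for all $j\in\Rc_i$ holds exactly when $|\Omega_i(t)|\le F$, so the two phrasings of the left-hand condition coincide and it suffices to argue with $|\Omega_i|$. The whole argument I would organize as an induction on the layer index $j$ of the RDAG, exploiting Property 2 of the definition (every in-neighbor of an agent in $\Sc_j$ lies in $\Sc_0\cup\cdots\cup\Sc_{j-1}$), together with the $F$-locality of $\A$ and the standing bound $r\ge 3F+1$. The base layer consists of normal agents in $\Sc_1$, whose behaving in-neighbors are leaders holding $\bm\tau_L$ by assumption, so no deeper hypothesis is needed there.

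For the direction $\bm\tau_i\equiv\bm\tau_L \Rightarrow |\Omega_i|\le F$: assuming $\bm\tau_i(t)=\bm\tau_L$ on $[t_i,\infty)$, I would show that the only in-neighbors that can disagree with $i$ are adversarial. By the inductive hypothesis every behaving in-neighbor of $i$ (which lies in a strictly earlier layer) is at $\bm\tau_L$ on the relevant interval, and the base case uses directly that behaving leaders hold $\bm\tau_L$. Hence $\Omega_i(t)\subseteq\A_i$, and $F$-locality gives $|\Omega_i(t)|=|\A_i|\le F$.

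For the converse $|\Omega_i|\le F \Rightarrow \bm\tau_i\equiv\bm\tau_L$: if $|\Omega_i(t)|\le F$ for all $t\ge t_i$, then \eqref{wij omega} makes every weight vanish, so by \eqref{cont control} $\bm u_i(t)=0$ and $\bm\tau_i$ is constant, say $\bm\tau_i\equiv c$, on $[t_i,\infty)$. At each such $t$ at least $|\V_i|-F$ in-neighbors satisfy $\bm\tau_j(t)=c$; since at most $F$ of $i$'s in-neighbors are adversarial, at least $|\V_i|-2F\ge r-2F\ge F+1\ge 1$ of the agreeing in-neighbors are behaving. By the inductive hypothesis (base case: leaders) these behaving agents are at $\bm\tau_L$, which together with agreement forces $c=\bm\tau_L$.

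The main obstacle is the inductive step for layers beyond the first: both counting arguments require that the behaving in-neighbors of $i$ have \emph{actually reached} $\bm\tau_L$, which is a convergence statement rather than a purely structural one. Making this rigorous means coupling the present equivalence with the finite-time convergence of each earlier layer and choosing the threshold times $t_i$ consistently with the dwell time $\eps_d$, so that the frozen interval of $i$ lies beyond the convergence time of all its in-neighbors and no circularity arises. Care is also needed to confirm $|\V_i|\ge r$ for every normal $i$, since it is precisely this redundancy that guarantees the surplus $|\V_i|-2F\ge 1$ of behaving agreeing neighbors that closes the converse direction.
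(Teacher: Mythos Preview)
Your necessity direction is close in spirit to the paper's: both descend through the RDAG layers to reach a behaving leader. The paper does it by a direct backward trace rather than a layer induction: assuming $\bm\tau_i\equiv\bm\tau^*\ne\bm\tau_L$ with $|\Omega_i|\le F$, it finds at least $2F+1$ in-neighbours held at $\bm\tau^*$, hence (by $F$-locality) at least one \emph{normal} in-neighbour in a strictly lower layer also held at $\bm\tau^*$, and then repeats this step down to $\Sc_0$, where a behaving leader at $\bm\tau^*$ contradicts $\bm\tau^*\ne\bm\tau_L$. Crucially, this trace-back uses only the structural assumption that behaving agents in $\Sc_0$ hold $\bm\tau_L$; it does not assume any intermediate layer has converged.

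Where your proposal genuinely diverges, and runs into trouble, is the sufficiency direction $\bm\tau_i\equiv\bm\tau_L\Rightarrow|\Omega_i|\le F$. The paper does not invoke any inductive hypothesis about in-neighbours sitting at $\bm\tau_L$. It argues purely from the dynamics of agent $i$ itself: if $\bm\tau_i$ is identically $\bm\tau_L$ then $\dot{\bm\tau}_i=\bm u_i\equiv 0$; by the form of the control law this forces every $j\in\Rc_i$ to satisfy $\bm\tau_j=\bm\tau_i$, and since $|\V_i\setminus\Rc_i|=F$, at most $F$ in-neighbours can disagree, i.e.\ $|\Omega_i|\le F$. Your plan instead takes as inductive hypothesis that every behaving in-neighbour of $i$ is already at $\bm\tau_L$. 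That statement is not the theorem being proved---it is the finite-time convergence content of the subsequent lemmas (Lemma~\ref{FT conv L2} et seq.), and those lemmas explicitly invoke Theorem~\ref{wij} to conclude that an agent remains at $\bm\tau_L$ once it arrives. The ``coupling with convergence'' you propose in your final paragraph would therefore be circular, not merely a matter of aligning threshold times with $\eps_d$. The fix is to drop the induction for this direction and argue locally from $\bm u_i=0$, as the paper does.
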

\begin{proof}
Sufficiency: Assume that there exists some time instant $t_i$ such that for all future times $t\geq t_i$, ${\|\bm \tau_i(t) - \bm \tau_L\|\equiv 0}$. This can only happen if all the filtered in-neighbors of the agent $i$ (i.e. $j \in \Rc_i)$ are at $\bm \tau_L$. To see why this is true, assume that there exists a filtered in-neighbour of agent $i$ which is not at $\bm \tau_L$. Then, by the virtue of the control law \eqref{cont control}, agent $i$ would have a non-zero control input $\bm u_i(t)$, which is a contradiction to the assumption that agent stays at the point $\bm \tau_L$. Hence, all its filtered in-neighbours are at the point $\bm \tau_L$. Since we assume that there are at most $F$ agents in the filtered set $\V_i \backslash \Rc_i$, we get that at most these $F$ agents may not be at $\tau_L$, i.e. $|\Omega_i(t)| \leq F$ and $w_{ij}(t) = 0 \; \forall j\in \Rc_i$.

Necessity: We prove this by contradiction. Let us assume that there exist $\bm \tau^* \neq \bm \tau_L$ and a time $t_i$ such that $\bm \tau_i(t) = \bm\tau^*$ and in addition we have that $|\Omega_i(t)|\leq F$ for all $t\geq t_i$. Let us assume that $i\in \Sc_p$. Since $|\V_i| \geq 3F+1$ and $|\Omega_i(t)|\leq F$, there are at least $2F+1$ in-neighbors which are also staying at $\bm \tau^*$. This implies that there is at least one normal behaving agent in the in-neighbour set of agent $i$ in the set $\bigcup_{l = 0}^{p-1}\Sc_l$, which stays at $\bm \tau^*$. This in turn means that one of its normal behaving in-neighbors in the set $\bigcup_{l = 0}^{p-2}\Sc_l$ stays identically at $\bm \tau^*$. Using this argument recursively, we get that there exists a normal in-neighbor in the set $\Sc_0$, which stays identically at the location $\bm \tau^*$. Since all the normal behaving in-neighbors $\Sc_0$ stay at $\bm \tau_L$, this contradicts the assumption $\bm\tau^* \neq\bm \tau_L$. Hence, we get $\bm \tau_i^* = \bm \tau_L$, and that $|\Omega(t)| \leq F$ for all $t\geq t_i$ only if $\bm \tau_i(t) = \bm \tau_L$ for all $t\geq t_i$.  
\end{proof}

\subsection{Convergence Analysis}
\label{sec:convanaly}
We now prove that under the control law \eqref{cont control}, filtering Algorithm \ref{alg:cont}, and the definition of control weights $w_{ij}$ in \eqref{wij omega}, all the normal behaving agents achieve formation in finite time, despite the presence of adversarial agents. We omit the argument $t$ for the sake of brevity when the dependence on $t$ is clear from the context. First, it is shown that for each normal agent $i\in \Sc_1$, $\|\bm \tau_i(t)-\bm \tau_L\|$ converges to 0 in finite time:
\begin{FT Conv L1}
Consider a digraph $\D$ which is an RDAG with parameter $3F+1$, where $\Sc_0 = \Le$ and $\A$ is an $F$-local set. For each normal agent $i\in \Sc_1$, $\bm \tau_L$ is a globally finite-time stable equilibrium for the closed-loop dynamics \eqref{cont dyn}-\eqref{wij omega}.
\end{FT Conv L1}
\begin{proof}
Choose the candidate Lyapunov function $V(\bm \tau_i) = \frac{1}{2}\|\bm \tau_i-\bm \tau_L\|^2$.
Note that since $\dot{\bm\tau}_i$ is piece-wise continuous in each interval $(t_l,t_{l+1})$, the trajectory $\bm \tau_i(t)$ is piecewise differentiable in each such interval. Let $\dot{\bm \tau}_i(t_{l+1}^-)$ and $\dot{\bm \tau}_i(t_{l+1}^+)$ denote the value of the vector $\dot{\bm \tau}_i$ just before and after the filtering at time instant $t_{l+1}$, respectively. Now, because the right hand side of \eqref{cont control} is bounded at the beginning of each interval, the upper right Dini derivative is defined for $\bm \tau_i(t)$ everywhere, and takes values as
\begin{align*}
    D^+(V(\bm \tau_i))(t) = \left\{
	\begin{array}{lc}
	\nabla V(\bm \tau_i)\dot{\bm \tau}_i(t), & \hbox{$t_l\leq t<t_{l+1}$,}\\
	\nabla V(\bm \tau_i)\dot{\bm \tau}_i(t_{l+1}^+), & \hbox{$t = t_{l+1}$.} \\
	\end{array}
	\right.,
\end{align*}
For the worst case, assume that there are $F$ adversarial agents and $R_i-F$ leaders in the filtered list $\Rc_i$. This requires that the adversarial agent should satisfy $\|\bm \tau_i-\bm \tau_j\|\leq \|\bm \tau_i-\bm \tau_L\|$ for all $j\in \A_i$ and for all $t\geq 0$, otherwise agent $j$ would be filtered out as per Section \ref{subsec: cont filt}. Using this and taking the upper right Dini-derivative of the candidate Lyapunov function along the closed loop trajectories of \eqref{cont dyn}, we get:
\begin{align*}
    D^+(V(\bm \tau_i) )& = (\bm \tau_i-\bm \tau_L)^T \sum_{j\in\Rc_i^\N} \gamma_i w_{ij}(\bm \tau_j - \bm \tau_i)\|\bm \tau_j-\bm \tau_i\|^{\alpha-1}\\
    & + (\bm \tau_i-\bm \tau_L)^T\sum_{j\in\Rc^\A_i} \gamma_i w_{ij}(\bm \tau_j - \bm \tau_i)\|\bm \tau_j-\bm \tau_i\|^{\alpha-1}\\
    & = \gamma_i\frac{R_i-F}{R_i}(\bm \tau_i-\bm \tau_L)^T(\bm \tau_L - \bm \tau_i)\|\bm \tau_L-\bm \tau_i\|^{\alpha-1} \\
    & + (\bm \tau_i-\bm \tau_L)^T\sum_{j\in\Rc^\A_i} \gamma_i w_{ij}(\bm \tau_j - \bm \tau_i)\|\bm \tau_j-\bm \tau_i\|^{\alpha-1}
\end{align*}



Since $\|\bm \tau_i-\bm \tau_j\|\leq \|\bm \tau_i-\bm \tau_L\|$ for all $j\in \Rc_i^\A$, we have:
\begin{align*}
    D^+(V(\bm \tau_i) )& \leq -\gamma_i\frac{R_i-F}{R_i} \|\bm \tau_i-\bm \tau_L\|^{1+\alpha}\\
    + &\gamma_i\sum_{j\in\Rc^\A_i} w_{ij}\|\bm \tau_i-\bm \tau_L\|\|\bm \tau_j - \bm \tau_i\|\|\bm \tau_j-\bm \tau_i\|^{\alpha-1}\\
    \leq&-\gamma_i\frac{R_i-F}{R_i}\|\bm \tau_i-\bm \tau_L\|^{1+\alpha} \\
    + &\gamma_i\frac{F}{R_i-F}\|\bm \tau_i-\bm \tau_L\|\|\bm \tau_L - \bm \tau_i\|\|\bm \tau_L-\bm \tau_i\|^{\alpha-1}\\
\Rightarrow D^+(V(\bm \tau_i) )&\leq -cV(\bm \tau_i)^\beta,
\end{align*}
where $\beta = \frac{1+\alpha}{2}<1$. Note that $D^+(V(\bm \tau_i))\leq 0$ which means that the Lyapunov candidate $V(\bm \tau_i(t))$ is bounded by $V(\bm \tau_i(0))$. This implies that the agent $i$ remains at a bounded distance from the leaders. Also, if any adversarial agent's state moves further away, by the filtering algorithm, they would be filtered out. Hence, each term in $\bm u_i^p$ remains bounded, which in turn means that $\gamma_i(t)>0$. Define $\gamma_i^* = \min\limits_t\gamma_i(t)$. 
Hence, we get that $c \triangleq  \gamma_i^*\frac{R_i-2F}{R_i}>0$. From the results in \cite{bhat2000finite}, since Dini derivative satisfies
$$D^+(V(\bm \tau_i) )\leq -cV(\bm \tau_i)^\beta,$$
for all $\bm \tau_i\in \mathbb R^2$, we get that $\bm \tau_L$ is finite-time stable, with the bound on the finite time of convergence given as:
\begin{align*}
    T_{1i}\leq \frac{V(\bm \tau_i(0))^{1-\alpha}}{c(1-\alpha)} =  \frac{\|\bm \tau_i(0)-\bm \tau_L\|^{1-\alpha}}{c(1-\alpha)}.
\end{align*}
Now, at $t = T_{1i}$, agent $i$ has its $\bm \tau_i$ co-located with all the normal leaders' $\bm \tau$. This means that there can be at max $F$ agents (i.e. the adversarial leaders) which are not co-located with the agent's $\bm \tau_i$. Hence, we get that $|\Omega_i(t)|\leq F$ for all $t\geq T_{1i}$. Therefore, by Theorem \ref{wij} agent $i$ will stay at $\bm \tau_L$ for all future times.  
\end{proof}
Next we take the case of normal agents $i \in \Sc_2$:
\begin{FT Conv L2}\label{FT conv L2}
Consider a digraph $\D$ which is an RDAG with parameter $3F+1$, where $\Sc_0 = \Le$ and $\A$ is an $F$-local set. Under the closed loop dynamics \eqref{cont dyn}-\eqref{wij omega}, the value $\bm \tau_i(t)$ for each normal agent $i\in \Sc_2$ converges to $\bm \tau_L$ in finite time $T_{2i}$. 
\end{FT Conv L2}
\begin{proof}
For the worst case analysis, assume that all the agents in $\Rc_i(0)$ are from $\Sc_1$ and are located such that $(\bm \tau_j(0)-\bm \tau_i(0))^T (\bm \tau_L-\bm \tau_i(0))<0$ for each $j\in \Rc_i(0)$. This simply means that the agents in $\Rc_i$ at time $t = 0$ are located on one side of the agent while the leaders are on the other side. This is the worst case because this arrangement of in-neighbors would make agent $i$ move away from the leaders, initially. Also, assume that $|\Rc^\A_i| = F$ and $|\Rc^\N_i| = R_i-F$, so that agent $i$ has maximum number of adversarial in-neighbours. Consider the candidate Lyapunov function $V(\bm \tau_i(t)) = \frac{1}{2}\|\bm \tau_i(t)-\bm \tau_L\|^2$. Taking its upper right Dini derivative along the closed-loop trajectories of agent $i$, we get $D^+(V(\bm \tau_i) )= (\bm \tau_i-\bm \tau_L)^T\sum_{j\in\Rc_i}  \gamma_i w_{ij}(\bm \tau_j - \bm \tau_i)\|\bm \tau_j-\bm \tau_i\|^{\alpha-1}.$
Now, from the assumption on the initial locations of agents in $\Rc_i(t)$, we get that $D^+(V(\bm \tau_i(0))) = \gamma_i(0)\sum_{j\in \mathcal{R}_i} w_{ij}(\bm \tau_i-\bm \tau_L)^T(\bm \tau_j - \bm \tau_i)\|\bm \tau_j-\bm \tau_i\|^{\alpha-1}>0$. Also, define $T_1 \triangleq \max\limits_{l\in \Sc_1\bigcap\N}T_{1l}$, i.e. $T_1$ is the maximum time after which each normal agent in $\Sc_1$ would achieve formation and have $\bm \tau_i = \bm \tau_L$. Hence, at time $t = T_1$, we get that
\begin{align*}
     D^+(V(\bm \tau_i))&  = \sum_{j\in\Rc_i^\N} \gamma_i w_{ij}(\bm \tau_i-\bm \tau_L)^T(\bm \tau_j - \bm \tau_i)\|\bm \tau_j-\bm \tau_i\|^{\alpha-1} \\
    & + \sum_{j\in\Rc^\A_i} \gamma_i w_{ij}(\bm \tau_i-\bm \tau_L)^T(\bm \tau_j - \bm \tau_i)\|\bm \tau_j-\bm \tau_i\|^{\alpha-1} \\
    &= \gamma_i\frac{R_i-F}{R_i}(\bm \tau_i-\bm \tau_L)^T(\bm \tau_L - \bm \tau_i)\|\bm \tau_L-\bm \tau_i\|^{\alpha-1} \\
    & + \sum_{j\in\Rc^\A_i} \gamma_i w_{ij}(\bm \tau_i-\bm \tau_L)^T(\bm \tau_j - \bm \tau_i)\|\bm \tau_j-\bm \tau_i\|^{\alpha-1} \\
    & \leq -\gamma_i\frac{R_i-F}{R_i}\|\bm \tau_L-\bm \tau_i\|^{1+\alpha} \\
    & +\gamma_i\sum_{j\in\Rc^\A_i} w_{ij}\|\bm \tau_i-\bm \tau_L\|\|\bm \tau_j-\bm \tau_i\|^{\alpha}
\end{align*}
Now, for all $j\in \Rc^\A_i$, the norm $\|\bm \tau_j(T_1)-\bm \tau_i(T_1)\| \leq \|\bm \tau_k(T_1)-\bm \tau_i(T_1)\|$ for some $k\in \Rc_i^\N$ otherwise, these adversarial agents would be filtered out. Using this and the fact that $\bm \tau_k(T_1) = \bm \tau_L$, we get that for all $t\geq T_1$, 
\begin{align*}
    D^+(V(\bm \tau_i(t))) &\leq -\gamma_i\frac{R_i-F}{R_i}\|\bm \tau_L-\bm \tau_i\|^{1+\alpha} \\
    & + \sum_{j\in\Rc^\A_i} \gamma_i w_{ij}\|\bm \tau_i-\bm \tau_L\|\|\bm \tau_L-\bm \tau_i\|^{\alpha} \\
    & = -\gamma_i\frac{R_i-2F}{R_i}\|\bm \tau_L-\bm \tau_i\|^{1+\alpha}<0.
\end{align*}
Since $D^+(V(\bm \tau_i))(0)>0$ while $D^+(V(\bm \tau_i))(T_1)<0$, and it is bounded above in the interval $(0,T_1)$, 
the increment in the value of $V(\bm \tau_i)$ is  bounded in the interval. Hence, agent $i$ would be at a finite distance away from the leaders at time $T_1$. This also implies that $\bm u_i^p(t)$ is bounded and hence $\gamma_i^* = \min\limits_t\gamma_i(t)>0$. Hence, we get that $D^+(V(\bm \tau_i) )\leq -cV(\bm \tau_i)^\beta$ where $c = \gamma_i^*\frac{R_i-2F}{R_i}>0$ and $\beta = \frac{1+\alpha}{2}<1$. Hence, we get that $\bm \tau_i\rightarrow\bm \tau_L$ in finite time. Let $\bm \tau_i(T_1)$ be the position of agent at time instant $T_1$. Using the bound on finite time of convergence, we get that for $t \geq T_{2i}$, $\bm \tau_i(t) = \tau_L$ where 
\begin{align*}
T_{2i} \leq T_1 + \frac{V(\bm \tau_i(T_1))^{1-\alpha}} {c(1-\alpha)} = T_1 +  \frac{\|\bm \tau_i(T_1)-\bm \tau_L\|^{1-\alpha}}{c(1-\alpha)}
\end{align*}
Since both $T_1$ and $\|\bm \tau_i(T_1)-\bm \tau_L\|$ are finite, $\alpha<1$ and $c>0$ we get that $T_{2i}$ is also finite. Again, after time instant $T_{2i}$, agent $i$ has its $\bm \tau_i$ co-located with all the normal in-neighbors' $\bm \tau$. This means that there can be at max F agents (i.e. the adversarial agents) which are not co-located with the agent's $\bm \tau_i$. Hence, we get that $|\Omega_i(t)|\leq F$ for all $t\geq T_{2i}$. Therefore, Theorem \ref{wij} implies that agent $i$ will stay at $\bm \tau_L$ for all $t\geq T_{1i}$.  
\end{proof}
We have shown that each normal agent $i\in \Sc_2$ will achieve the formation in finite time. Now we present the general case:
\begin{FT Conv Lp}
Consider a digraph $\D$ which is an RDAG with parameter $3F+1$, where $\Sc_0 = \Le$ and $\A$ is an $F$-local set. Under the closed loop dynamics \eqref{cont dyn}-\eqref{wij omega}, $\bm \tau_i$ will converge to $\bm \tau_L$ in finite time for all normal agents $i \in \N$.
\end{FT Conv Lp}
\begin{proof}
We have already shown that all the agents in $\Sc_1$ and $\Sc_2$ will achieve formation in finite time. Consider any agent $i \in S_3$. Since all the in-neighbors of agents in $\Sc_3$ are from $\bigcup_{i = 0}^2\Sc_i$, after a finite time period all the agents in $\V_i \bigcap \N$ will satisfy $\bm \tau_i = \bm \tau_L$. Define $T_2 =\triangleq \max\limits_kT_{2k}$, where $k$ belongs to the set of normal agents in $\Sc_1$. After the time instant $t = T_2$, the Lyapunov candidate $V(\bm \tau_i) = \frac{1}{2}\|\bm \tau_i-\bm \tau_L\|^2$ and its Dini derivative will satisfy the conditions similar to Lemma \ref{FT conv L2}. Hence, we get that all the normal agents in $\Sc_3$ will achieve formation in finite time. This time can be bounded as $T_{3i} \leq T_2 +  \frac{\|\bm \tau_i(T_2)-\bm \tau_L\|^{1-\alpha}}{c(1-\alpha)}$ for each $i\in \Sc_3$. This argument can be used recursively to show that each normal agent in $\bigcup_{l =1}^p\Sc_l$ will achieve formation in finite time. Defining $T_l$ as the maximum time by which all the normal agents in set $\Sc_l$ will achieve the formation, one can establish the following relation:
\begin{align*}
    T_{l+1} & \leq T_l + \max_{i\in \Sc_{l+1}}\frac{\|\bm \tau_i(T_l)-\bm \tau_L\|^{1-\alpha}}{c(1-\alpha)} \quad l\geq 1\\
    T_1 & \leq \max_{i\in \Sc_{1}}\frac{\|\bm \tau_i(0)-\bm \tau_L\|^{1-\alpha}}{c(1-\alpha)}
\end{align*}
and since $T_l$ and $\|\bm \tau_i(T_l)-\bm \tau_L\|$ both are finite, we get that $T_{l+1}$ is a finite number. 
\end{proof}
Hence, we have shown that under the effect of our protocol, each normal agent $i$ would achieve formation in finite time, despite adversarial agents. In the next section, we show that our filtering mechanism can be used for the case of discrete time systems as well.
\section{Discrete-time System}\label{sec:disc sys}
\subsection{Filtering Algorithm and Control Law}\label{subsec: disc filt}
At each time step $t$, each agent $i \in \N$ applies the following algorithm:
\begin{algorithm}\caption{Discrete-Time Filtering}\label{alg:disc}
\begin{algorithmic}
    \Procedure{UpdateFilteredList}{}
        \State Calculate $\tau_{ij} = \nrm{\bm \tau_j - \bm \tau_i}\ \forall j \in \V_i$ 
        \State Sort $\tau_{ij}$ values such that $\tau_{ij_1} \geq \ldots \geq \tau_{ij_{|\V_i|}}$
        \State $\Rc_i[t] \leftarrow \{j : \tau_{ij} \in \{\tau_{ij_{F+1}}, \ldots, \tau_{ij_{|\V_i|}} \} \}$
    \EndProcedure
\end{algorithmic}
\end{algorithm}

The discrete time system dynamics are given as
\begin{align}\label{dis dyn}
    \bm \tau_i[t+1] &= \bm p_i[t+1] - \bm \xi_i =\bm p_i[t] + \bm u_i[t] - \bm \xi_i \nonumber\\
    &= \bm \tau_i[t] + \bm u_i[t]
\end{align}
The input of each agent $i$ is bounded above by $u_M > 0$, i.e. $\|\bm u_i[t]\| \leq u_M$ for all $t\geq 0$. Under this constraint, the saturation function is given as
\begin{align}
    \sigma_i[t] & = \min\{\|\bm u_i^p[t]\|, u_M\},\\
    \bm u_i^p[t] &= \sum_{j\in\Rc_i[t]} w_{ij}[t] (\bm \tau_j[t]-\bm \tau_i[t]).
\end{align}
To simplify the notation, define $\gamma_i[t] = \frac{\sigma_i[t]}{\|\bm u_i^p[t]\|}$. We define the control law $\bm u_i[t]$ as
\begin{align}\label{dis control}
   \bm u_i[t] = \gamma_i[t]\sum_{j\in\Rc_i[t]} w_{ij}[t] (\bm \tau_j[t]-\bm \tau_i[t]),
\end{align}
where for all time steps $t$ and for all $i \in \N$, $w_{ij}[t]>0$ and $\sum_{j\in\Rc_i[t]} w_{ij}[t]  = 1$. For simplicity, we choose $w_{ij}[t] = \frac{1}{R_i}$. We point out that $0<\gamma_i[t]\leq 1$. 
In the following subsection, we prove that under the effect of the control law \eqref{dis control} and Algorithm \ref{alg:disc}, normal behaving agents in the discrete time setting are also guaranteed to achieve formation despite the presence of adversarial agents. 

\subsection{Convergence Analysis}
For our analysis, we need the following result:
\begin{lemma}\label{bk exp}
Let $b[k] = kc^kb[0],\ k \in \Z_{\geq 0}$ be a series where $b[0]>0$ and $0 < c<1$. Then there exist positive constants $\alpha ,\beta$ with $c<\beta<1$ such that $\forall k \in \Z_{\geq 0}$,
\begin{align}\label{bk conv}
    b[k] = kc^kb[0] \leq \alpha\beta^k.
\end{align}
\end{lemma}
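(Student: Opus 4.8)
The plan is to separate the polynomial factor $k$ from the geometric factor $c^k$ by comparing $c^k$ against a slightly larger geometric rate $\beta^k$. Concretely, I would first \emph{fix} any $\beta$ with $c < \beta < 1$ (such a $\beta$ exists since $c < 1$) and set $\rho \triangleq c/\beta$, so that $0 < \rho < 1$. Writing $c^k = \rho^k \beta^k$, the claimed inequality $k c^k b[0] \leq \alpha \beta^k$ becomes equivalent to requiring $(k \rho^k)\, b[0] \leq \alpha$ for every $k$, i.e. to bounding the sequence $k \rho^k$ uniformly in $k$.

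The core step is therefore to show that $g[k] \triangleq k \rho^k$ is bounded over $k \in \mathbb{Z}_{\geq 0}$. This is immediate from the fact that $\lim_{k \to \infty} k \rho^k = 0$ for $0 < \rho < 1$ (the geometric decay dominates the linear growth), so the non-negative sequence $\{g[k]\}$ has only finitely many terms exceeding any given positive threshold and hence attains a finite maximum $M \triangleq \max_{k \geq 0} k \rho^k < \infty$. If one wants an explicit value, treating $k$ as a real variable and setting $\frac{d}{dk}(k\rho^k) = \rho^k(1 + k \ln \rho) = 0$ gives the maximizer $k^* = 1/\ln(1/\rho)$ with maximal value $\frac{1}{e\ln(1/\rho)}$; since this continuous function upper-bounds its values at all integer arguments, $M \triangleq \frac{1}{e\ln(1/\rho)}$ is a valid uniform bound for the integer sequence as well.

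With $M$ in hand I would simply set $\alpha \triangleq M\, b[0] > 0$. Then for every $k \in \mathbb{Z}_{\geq 0}$,
\begin{align*}
    b[k] = k c^k b[0] = (k \rho^k)\, \beta^k\, b[0] \leq M b[0] \beta^k = \alpha \beta^k,
\end{align*}
which is exactly \eqref{bk conv}. The chosen constants satisfy all requirements: $\alpha > 0$ since $M, b[0] > 0$, and $\beta$ was taken with $c < \beta < 1$ by construction. The degenerate index $k = 0$ causes no trouble, since there the left side vanishes while $\alpha \beta^0 = \alpha > 0$.

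I do not expect a genuine obstacle here; the only points requiring care are (i) ensuring the comparison rate $\beta$ is chosen strictly between $c$ and $1$ so that $\rho = c/\beta$ is strictly less than one, which is precisely what makes $k\rho^k$ decay and hence bounded, and (ii) confirming the uniform bound $M$ holds for all non-negative integers rather than merely asymptotically, which follows since $k\rho^k \to 0$ guarantees the supremum of the sequence is attained at a finite index.
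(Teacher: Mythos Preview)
Your proof is correct and essentially identical to the paper's: the paper simply asserts that any $c<\beta<1$ and $\alpha\geq \frac{b[0]}{e\log(\beta/c)}$ work, which is exactly your choice $\alpha = M\,b[0]$ with $M = \frac{1}{e\ln(1/\rho)} = \frac{1}{e\ln(\beta/c)}$ obtained via the same calculus maximization of $k\rho^k$. You have merely spelled out the verification that the paper leaves implicit.
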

\begin{proof}
It can be readily verified that for any $c<\beta<1$ and $\alpha\geq \frac{b[0]}{e\log\frac{\beta}{c}}$, the inequality \eqref{bk conv} holds for all $k\geq 0$. 
\end{proof}
First, consider the normal agents in the set $\Sc_1$:
\begin{lemma}
\label{Conv L1}
Consider a digraph $\D$ which is an RDAG with parameter $3F+1$, where $\Sc_0 = \Le$ and $\A$ is an $F$-local set. For every normal agent $i\in \Sc_1$, $\|\bm \tau_i[t]-\bm \tau_L\|$ converges to 0 exponentially.
\end{lemma}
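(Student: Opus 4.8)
The plan is to track the scalar error $e_i[t] \triangleq \|\bm \tau_i[t] - \bm \tau_L\|$ and show it contracts by a uniform factor $\rho \in (0,1)$ at each step. First I would exploit the RDAG structure: since $i \in \Sc_1$, property (2) of the RDAG definition forces $\V_i \subseteq \Sc_0 = \Le$, so every in-neighbor of $i$ is a leader. As $\A$ is $F$-local, at most $F$ of these are adversarial, leaving at least $|\V_i| - F \geq 2F+1$ normal leaders, each holding $\bm \tau_j \equiv \bm \tau_L$. After Algorithm \ref{alg:disc} discards the $F$ largest distances, $R_i = |\V_i| - F \geq 2F+1$ is constant in time, the surviving set contains at least $R_i - F \geq F+1$ normal leaders, and at most $F$ adversaries, i.e. $|\Rc_i^\N| \geq F+1$ and $|\Rc_i^\A| \leq F$.

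Next I would establish the key filtering bound: any surviving adversary $j \in \Rc_i^\A$ satisfies $\|\bm \tau_j[t] - \bm \tau_i[t]\| \leq e_i[t]$. This holds because each of the $\geq 2F+1$ normal leaders sits at distance exactly $e_i[t]$ from $\bm \tau_i[t]$, so an in-neighbor at strictly larger distance can only be adversarial; there are at most $F$ such agents, hence they occupy the top $F$ slots of the sorted list and are precisely the ones removed, and no surviving agent can exceed $e_i[t]$. Substituting $\bm \tau_j = \bm \tau_L$ for $j \in \Rc_i^\N$ into the update \eqref{dis dyn}--\eqref{dis control} then gives
\begin{align*}
\bm \tau_i[t+1] - \bm \tau_L = \Big(1 - \tfrac{\gamma_i |\Rc_i^\N|}{R_i}\Big)(\bm \tau_i - \bm \tau_L) + \tfrac{\gamma_i}{R_i}\sum_{j \in \Rc_i^\A}(\bm \tau_j - \bm \tau_i).
\end{align*}
Taking norms, noting the leading coefficient is nonnegative (since $\gamma_i |\Rc_i^\N| \leq R_i$), bounding each adversarial term by $e_i[t]$ via the filtering bound, and using $|\Rc_i^\N| \geq R_i - F$, $|\Rc_i^\A| \leq F$, I would collapse this to the scalar contraction
\begin{align*}
e_i[t+1] \leq \Big(1 - \tfrac{\gamma_i[t](R_i - 2F)}{R_i}\Big) e_i[t],
\end{align*}
where $R_i - 2F \geq 1$ and $0 < \gamma_i[t] \leq 1$ keep the factor in $(0,1)$.

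The one delicate point — and the main obstacle — is turning this step-dependent factor into a \emph{uniform} $\rho < 1$, which needs a positive lower bound on $\gamma_i[t]$ that I must obtain without circularity. I would argue as follows. From the filtering bound, $\|\bm u_i^p[t]\| \leq \tfrac{1}{R_i}\sum_{j \in \Rc_i}\|\bm \tau_j - \bm \tau_i\| \leq e_i[t]$; moreover $\bm u_i^p[t] \neq \bm 0$ whenever $e_i[t] > 0$, since the $|\Rc_i^\N| \geq F+1$ normal pulls toward $\bm \tau_L$ strictly outweigh the at-most-$F$ adversarial contributions, each bounded by $e_i[t]$. Hence $\gamma_i[t]$ is well defined and strictly positive, and the displayed inequality already yields $e_i[t+1] \leq e_i[t]$. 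Thus $e_i[t]$ is non-increasing, so $e_i[t] \leq e_i[0]$ for all $t$, which gives $\gamma_i[t] = \min\{\|\bm u_i^p\|, u_M\}/\|\bm u_i^p\| \geq \min\{1, u_M/e_i[0]\} =: \gamma_i^* > 0$. Setting $\rho = 1 - \gamma_i^*(R_i - 2F)/R_i \in (0,1)$ then gives $e_i[t] \leq \rho^{t} e_i[0]$, i.e. exponential convergence. Note that, unlike the later lemmas, the $\Sc_1$ case couples only to leaders fixed at $\bm \tau_L$, so the contraction is purely geometric and Lemma \ref{bk exp} is not needed here.
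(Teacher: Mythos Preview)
Your proposal is correct and follows essentially the same route as the paper: split $\Rc_i$ into normal leaders (all at $\bm\tau_L$) and at most $F$ surviving adversaries, use the filtering to bound each adversarial term by $e_i[t]$, and collapse the update to the one-step contraction $e_i[t+1]\leq\bigl(1-\gamma_i[t]\tfrac{R_i-2F}{R_i}\bigr)e_i[t]$. The paper then simply sets $\gamma_i^{*}=\min_k\gamma_i[k]>0$ without further argument; your additional step---first using the contraction to get $e_i[t]\leq e_i[0]$, hence $\|\bm u_i^p[t]\|\leq e_i[0]$, hence the explicit bound $\gamma_i[t]\geq\min\{1,u_M/e_i[0]\}$---is a genuine strengthening that closes a small gap the paper leaves open, but it does not change the overall strategy.
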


\begin{proof}
For the worst case, assume there are $F$ adversarial agents. Consider any normal agent $i \in \Sc_1$. Since all of its in-neighbours are from $\Sc_0$, we get that $\V_i \subset \Le$ and for all $k \in \V_i\bigcap\N$, $\bm \tau_k = \bm \tau_L$. By definition of an RDAG, $|\V_i| \geq 3F+1$ which implies $R_i \geq 2F+1$ and $|\Rc_i^\N| \geq F+1$. For the worst case, suppose that $\|\bm \tau_i[t]-\bm \tau_j[t]\| \leq \|\bm \tau_i[t]-\bm \tau_L\|$ $\forall j\in \A_i$ so that none of the adversarial agents are filtered out. This implies that $|\Rc_i^\A| = F$ and $|\Rc_i^\N| = R_i-F$. From the closed loop dynamics, we get: 
\begin{align*}
     \bm \tau_i[t+1]-\bm \tau_L & = \bm \tau_i[t] + \sum_{j\in\Rc_i} \gamma_i w_{ij} (\bm \tau_j[t]-\bm \tau_i[t]) - \bm \tau_L.
\end{align*}
Noting that $\Rc_i\subset \Le$, after some manipulation we obtain:
\begin{align}\label{tau t1}
    \bm \tau_i[t+1]-\bm \tau_L & = (1-\gamma_i\frac{R_i-F}{R_i})(\bm \tau_i[t] -\bm \tau_L)\nonumber\\
    & +  \sum_{j\in\Rc_i^\A} \gamma_i w_{ij} (\bm \tau_j[t]-\bm \tau_i[t]).
\end{align}
Since $\|\bm \tau_i[t]-\bm \tau_j[t]\| \leq \|\bm \tau_i[t]-\bm \tau_L[t]\|$ for all $j\in\Rc_i^\A$, we get $\|\sum_{j\in\Rc_i^\A} w_{ij} (\bm \tau_j[t]-\bm \tau_i[t])\| \leq \frac{F}{|\Rc_i|}\|\bm \tau_i[t]-\bm \tau_L \|$. Hence, we get the bound on $\|\bm \tau_i[t+1]-\bm \tau_L\|$ as
\begin{align}\label{tau norm}
    \|\bm \tau_i[t+1]-\bm \tau_L\| \leq (1-\gamma_i\frac{R_i-2F}{R_i})\|\bm \tau_i[t] -\bm \tau_L\|.
\end{align}
Let $\gamma_i^* = \min_k\gamma_i[k]>0$.
Since $1-\gamma_i\frac{R_i-2F}{R_i}\leq 1-\gamma_i^*\frac{R_i-2F}{R_i}<1$, define $c = 1-\gamma_i^*\frac{R_i-2F}{R_i}$, so that we get $ \|\bm \tau_i[t+1]-\bm \tau_L\| \leq c\|\bm \tau_i[t] -\bm \tau_L\|$, i.e. $\|\bm \tau_i[t]-\bm \tau_L\|$ is an exponentially converging sequence. 
\end{proof}

For $i \in \Sc_p$ where $p\geq 2$, we know that there are at most $F$ adversarial agents in $\Rc_i$. Note that by definition of the network RDAG, all agents in $\Rc_i$ are from $\bigcup_{j = 0}^{p-1}\Sc_j$. For the worst-case analysis, we assume there are $F$ adversarial agents in $\Rc_i$ and all the normal agents in $\Rc_i$ are from $\Sc_{p-1}$. From the closed-loop dynamics of the agent $i$, we get:
\begin{align*}
    \bm \tau_i[t+1]- \bm \tau_L & = \bm \tau_i[t] + \sum_{j\in\Rc_i} \gamma_i w_{ij} (\bm \tau_j[t]-\bm\tau_i[t]) - \bm \tau_L,
\end{align*}
which after some manipulation gives:
\begin{align}\label{tau t2}
   &\bm \tau_i[t+1]-\bm \tau_L = (1-\gamma_i\frac{R_i-F}{R_i})(\bm \tau_i[t]-\bm \tau_L) \nonumber\\
    &+\sum_{j\in\Rc_i^\N} \gamma_i w_{ij} (\bm \tau_j[t] - \bm \tau_L) + \sum_{j\in\Rc_i^\A} \gamma_i w_{ij} (\bm \tau_j[t]-\bm\tau_i[t]).
\end{align}
Using the same logic as in Lemma \ref{Conv L1}, we assume for the worst case that $\forall j\in \A_i$, $\|\bm \tau_i[t]-\bm \tau_j[t]\|\leq \|\bm \tau_i[t]-\bm \tau_k[t]\|$ for some $k\in \Rc_i^\N$.
Using this, the fact that $|\Rc_i^\A| = F$, we get
\begin{align*}
\|\sum_{j\in\Rc_i^\N} w_{ij} (\bm \tau_j[t] - \bm \tau_L)\|& \leq \frac{|\Rc_i|-F}{|\Rc_i|}\|\bm \tau_{k}[t] -\bm \tau_L\|,\\
\|\sum_{j\in\Rc_i^\A} w_{ij} (\bm \tau_j[t]-\bm\tau_i[t])\|&\leq \frac{F}{|\Rc_i|}\|\bm \tau_{k}[t] - \bm \tau_i\|.
\end{align*}
We can bound $\|\bm \tau_{k} - \bm \tau_i\| \leq \|\bm \tau_{k} -\bm \tau_L\| + \|\bm \tau_i -\bm \tau_L\|$ to get:
\begin{align}\label{tau norm L2}
   \|\bm \tau_i[t+1]-\bm \tau_L\| &\leq c\|\bm \tau_i[t]-\bm \tau_L\| + \|\bm \tau_{k}-\bm \tau_L\|,
\end{align}
where $c = 1-\gamma_i^*\frac{R_i-2F}{R_i}<1$ where $\gamma_i^*$ is defined as in Lemma \ref{Conv L1}. Inequality \eqref{tau norm L2} is true for every normal agent in $\Sc_p$ with $p\geq 2$. Using this observation, we next consider the case of agents in set $\Sc_2$ :

\begin{Conv L2}\label{l2 conv}
Consider a digraph $\D$ which is an RDAG with parameter $3F+1$, where $\Sc_0 = \Le$ and $\A$ is an $F$-local set. For every normal agent $i\in \Sc_2$, $\|\bm \tau_i[t]-\bm \tau_L\|$ converges to 0 exponentially. 
\end{Conv L2}
\begin{proof}
Define $a[t] = \|\bm \tau_i[t]-\bm \tau_L\|$, $b_k[t] = \|\bm \tau_{k}[t]-\bm \tau_L\|$ so that \eqref{tau norm L2} can be written as $ a[t+1] \leq ca[t] + b_k[t]$:
\begin{align}\label{ak conv}
   a[t+1]& \leq c^{t+1}a[0] + \sum\limits_{i = 0}^t c^{t-i}b_k[i].
\end{align}
Now, $b_k[i]$ represents the norm $\nrm{\bm \tau_k[i] - \bm \tau_L}$ of a normal agent $k\in \Sc_1$, which can be bounded as $b_k[i] \leq c_k^ib_k[0]$ as per \eqref{tau norm} where $c_k = 1-\gamma_k^*\frac{R_k-2F}{R_k}<1$. For the sake of brevity, let $a_0 = a[0]$, $b_{k0} = b_k[0]$. Using this, we get:
\begin{align*}
    a[t+1]& \leq c^{t+1}a_0 + \sum\limits_{i = 0}^t c^{t-i}c_k^ib_{k0} 
\end{align*}
Define $b^*_0 = \max\limits_{k\in \Rc_i^\N}b_{k0}$, $c^* = \max\limits_{k\in \Rc_i^\N} c_k$, and $\tilde c = \max\{c, c^*\}$, so that 
\begin{align*}
    a[t+1]& \leq \tilde c^{t+1}a_0 + \sum\limits_{i = 0}^t \tilde c^tb^*_{0}  = \tilde c^{t+1}a_0 + (t+1) \tilde c^tb^*_{0}. 
\end{align*}
Using this and Lemma \ref{bk exp}, i.e., $k\tilde c^tb_0^* \leq \alpha\beta^t$, we get that:
\begin{align*}
   a[t+1] \leq \tilde c^t(ca_0 +b^*_0) + t\tilde c^tb^*_0 \leq \tilde c^t(ca_0 +b^*_0) + \alpha\beta^t,
\end{align*}
where $\alpha >0$ and $\tilde c<\beta<1$. Now, since $\tilde c<\beta$, we get:
\begin{align*}
    a[t+1] \leq \tilde c^t(ca_0 +b[0]) + \alpha\beta^t \leq \beta^t(ca_0 +b[0] + \alpha).
\end{align*}
As $\beta<1$, $a_t$ converges to 0 exponentially, i.e., for a normal agent $i\in \Sc_2$, $\|\bm \tau_i[t]-\bm \tau_L\|$ converges to 0 exponentially.
\end{proof}
Note that this result can be interpreted as follows: ${\|\bm \tau_i-\bm \tau_L\|}$ for $i \in \N$ converges to 0 exponentially if $\|\bm \tau_j-\bm \tau_L\|$ converges to 0 exponentially for all its normal in-neighbours $j\in \Rc_i\bigcap\N$. Using this, we can state the following result for all normal behaving agents:


\begin{theorem}
Consider a digraph $\D$ which is an RDAG with parameter $3F+1$, where $\Sc_0 = \Le$ and $\A$ is an $F$-local set. Under the closed loop dynamics \eqref{dis dyn}-\eqref{dis control}, $\|\bm \tau_i[t]-\bm \tau_L\|$ converges to 0 exponentially for all agents $i \in \N$. 
\end{theorem}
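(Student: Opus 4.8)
The plan is to prove the theorem by induction on the layer index $p$ of the RDAG partition $\Sc_0, \ldots, \Sc_m$, taking the already-established layer-wise results as base cases and using the recursive bound \eqref{tau norm L2} as the engine of the inductive step. The cases $p=1$ (Lemma \ref{Conv L1}) and $p=2$ (Lemma \ref{l2 conv}) give that every normal agent in $\Sc_1$ and $\Sc_2$ has $\nrm{\bm \tau_i[t] - \bm \tau_L}$ converging to $0$ exponentially, so these serve as the base. For the inductive hypothesis I would assume that for every normal agent $j$ in $\bigcup_{l=0}^{p-1}\Sc_l$ there are constants $\alpha_j > 0$ and $0 < \beta_j < 1$ with $\nrm{\bm \tau_j[t] - \bm \tau_L} \leq \alpha_j \beta_j^t$ for all $t \geq 0$.

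For the inductive step, fix a normal agent $i \in \Sc_p$. By the definition of an RDAG all of its in-neighbors lie in earlier layers, so any normal in-neighbor $k \in \Rc_i^\N$ appearing in \eqref{tau norm L2} already satisfies the inductive hypothesis. Writing $a[t] = \nrm{\bm \tau_i[t] - \bm \tau_L}$ and using $\nrm{\bm \tau_k[t] - \bm \tau_L} \leq \alpha_k \beta_k^t$, inequality \eqref{tau norm L2} becomes the scalar recursion $a[t+1] \leq c\,a[t] + \alpha_k \beta_k^t$ with $c = 1 - \gamma_i^* \frac{R_i - 2F}{R_i} < 1$. Unrolling this recursion gives
\begin{align*}
    a[t+1] \leq c^{t+1} a[0] + \sum_{s=0}^t c^{t-s} \alpha_k \beta_k^s.
\end{align*}
Setting $\tilde c = \max\{c, \beta_k\} < 1$ bounds every summand by $\alpha_k \tilde c^t$, so the convolution is at most $(t+1)\alpha_k \tilde c^t$; applying Lemma \ref{bk exp} to $(t+1)\tilde c^t$ re-exponentiates it as $\alpha' \beta'^t$ for some $\alpha' > 0$ and $\tilde c < \beta' < 1$. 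Hence $a[t] \leq \alpha_i \beta_i^t$ for suitable $\alpha_i > 0$, $\beta_i < 1$, which is exactly the form needed to close the induction. This computation is identical in spirit to the one carried out for $\Sc_2$ in Lemma \ref{l2 conv}, with the purely geometric in-neighbor bound there replaced by the exponential bound furnished by the inductive hypothesis.

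Since the RDAG has finitely many layers $m$, the induction terminates after $m$ steps and covers every normal agent. The point I expect to require the most care is controlling the exponential base as the induction proceeds: each step replaces $\beta_k$ with a strictly larger $\beta' < 1$, so the bases form an increasing chain across the layers, and one must verify they stay uniformly below $1$. Because $m$ is finite, this chain $\beta_1 < \beta_2 < \cdots < \beta_m < 1$ can be fixed in advance, so every normal agent inherits a base strictly below $1$; this is the only delicate point, and it is resolved entirely by the finiteness of the partition. The remaining manipulations, namely unrolling the recursion and bounding the convolution, are routine and follow Lemma \ref{l2 conv} closely.
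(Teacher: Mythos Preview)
Your proposal is correct and follows essentially the same inductive approach as the paper's proof, which also uses the base cases from Lemmas \ref{Conv L1} and \ref{l2 conv} and propagates exponential convergence layer by layer via the recursion \eqref{tau norm L2} (equivalently \eqref{ak conv}). You are in fact more explicit than the paper about unrolling the recursion, invoking Lemma \ref{bk exp}, and tracking the increasing chain of bases $\beta_p$ across layers; the paper simply elides all of this with the phrase ``repeating this logic.''
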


\begin{proof}
We have proven this result for agents in $\Sc_1$ and $\Sc_2$ in Lemmas \ref{Conv L1} and \ref{l2 conv}. We now consider any node $i \in \Sc_p$ for arbitrary $p$. Observe that every agent $i\in \Sc_p$ satisfies the equation \eqref{ak conv}, where $a[t]$ represents the norm $\|\bm \tau_i[t]-\bm \tau_L\|$ and $b_j[t] = \|\bm \tau_j[t]-\bm \tau_L\|$, where $j\in \bigcup_{l = 0}^{p-1}\Sc_l$. From Lemma \ref{l2 conv}, we have that $\|\bm \tau_i-\bm\tau_L\|$ for normal agents in $\Sc_2$ converges exponentially to 0. Hence, it follows that for each normal agent in $\Sc_3$, $\|\bm \tau_i-\bm \tau_L\|$ converges to 0 exponentially since all of its normal behaving agents are from the set $\bigcup_{l = 0}^2\Sc_l$. Repeating this logic shows that for each normal agent $i \in \Sc_p$, $\|\bm \tau_i-\bm\tau_L\|$ converges exponentially to 0 for each $p\geq 1$. 
\end{proof}

\section{Simulation}\label{sec:sim}
We consider an RDAG of 80 agents with parameter $r = 16$ and $F = 5$. There are 5 sub-levels, $\Sc_l$ with $|\Sc_l| = 16$ for $ l\in \{0,1,2,3,4\}$. The set $\Sc_0$ is composed entirely of agents designated to behave as leaders. In the simulation, $\A$ is a 5-local model with 5 agents in each of the levels $\Sc_l$ becoming adversarial (including in $\Sc_0$). The simulation treats a worst-case scenario in the sense that each agent $i \in \Sc_l,\ l \geq 1$, has in-neighbours only in $\Sc_{l-1}$ and no leader in-neighbors. 
The agents have states in $\R^2$. The vector $\bm \xi$ specifies the formation as points on circle of radius $10 \; \textrm{m}$ centered at $\begin{bmatrix}0 & 10\end{bmatrix}^T$. The vectors $\bm p_i(0),\ i \in \Le$ are chosen such that $\bm \tau_i(0)$ is at the origin for all $i \in \Le$. The vectors $\bm p_j(0)$ for all other agents $j \in (\V_i \backslash \Le)$ are initialized such that their $\bm \tau_j(0)$ values are randomly initialized values. We choose the maximum allowed speed of the agents as $u_M = 1$. These conditions are used for both the continuous and discrete time simulations. For the continuous time case, the control parameter $\alpha$ is chosen as $\alpha = 0.8$. 

Figure \ref{fig:norm cont} shows a plot of $\|\bm \tau_i(t)-\bm \tau_L\|$ versus time for a subset of the normal agents. It is clear that all the normal agents converge to the point where their $\bm \tau$ values are same as those of leader in finite time. Figure \ref{fig:path cont} shows the path $\bm p_i(t) =\bmx{x_i(t) & y_i(t)}^T $ of all the agents and a subset of the adversarial agents. For sake of clarity, only 4 of the 25 misbehaving agents are depicted in the figure. In Figure \ref{fig:path cont} and Figure \ref{fig:path disc}, it can be noted that while some normal agents (belonging to set $\Sc_1$ move directly towards their desired locations, other normal agents first move away from their desired locations. This is in agreement with our analysis; malicious agents are able to exert a bounded influence on normal agents in $\Sc_l$, $l \geq 2$ which do not have any leaders as in-neighbors, while convergence is still guaranteed in a finite time period.

\begin{figure}[!htbp]
	\centering
	\includegraphics[width=0.9\columnwidth,clip]{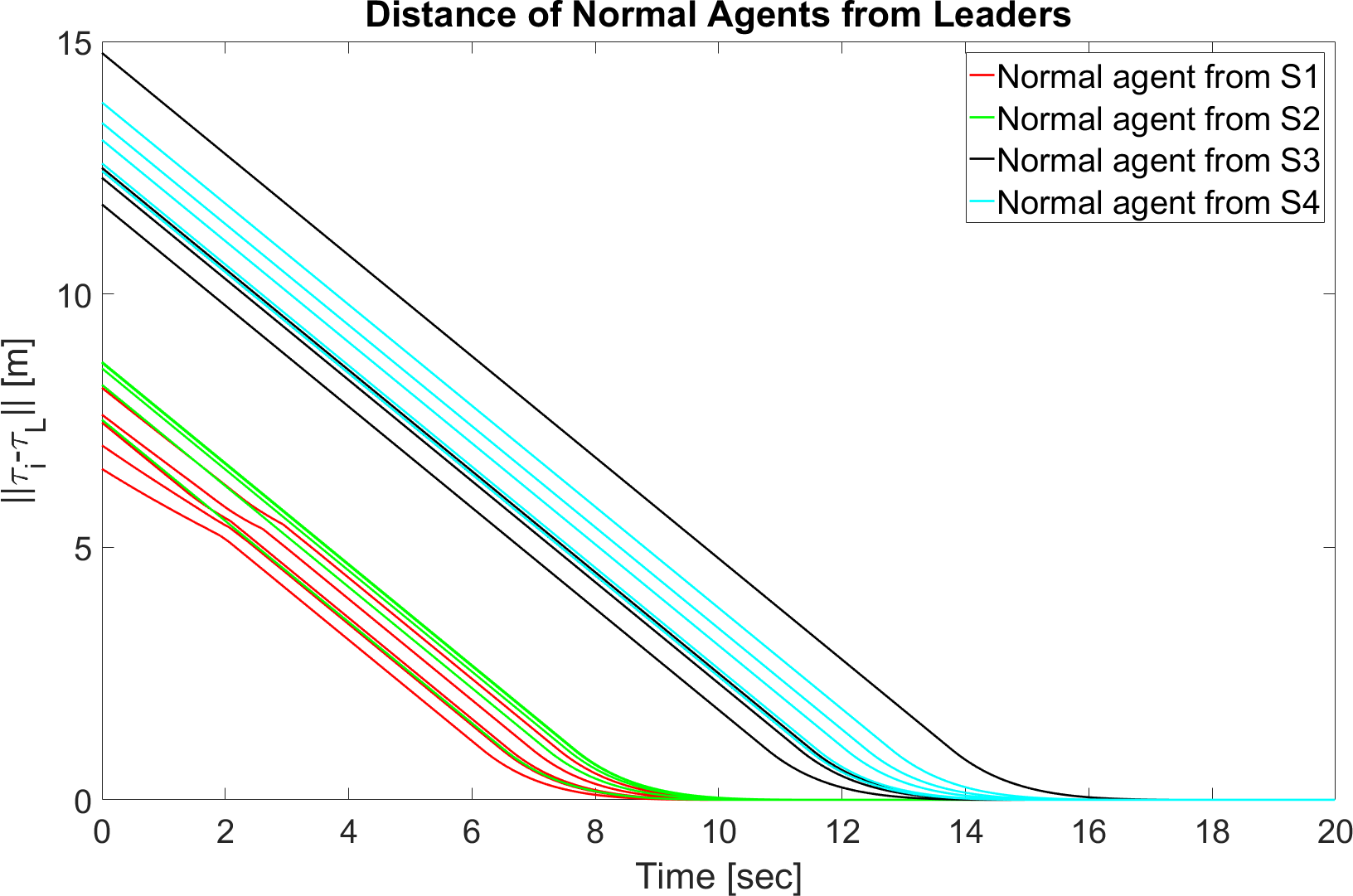}
	\caption{Norm $\|\bm \tau_i(t)-\bm \tau_L\|$ of a subset of the normal agents in the continuous time case. For sake of clarity, only a few normal nodes from each set $\Sc_p$ are shown.}
	\label{fig:norm cont}
\end{figure}

\begin{figure}[!htbp]
	\centering
	\includegraphics[width=0.9\columnwidth,clip]{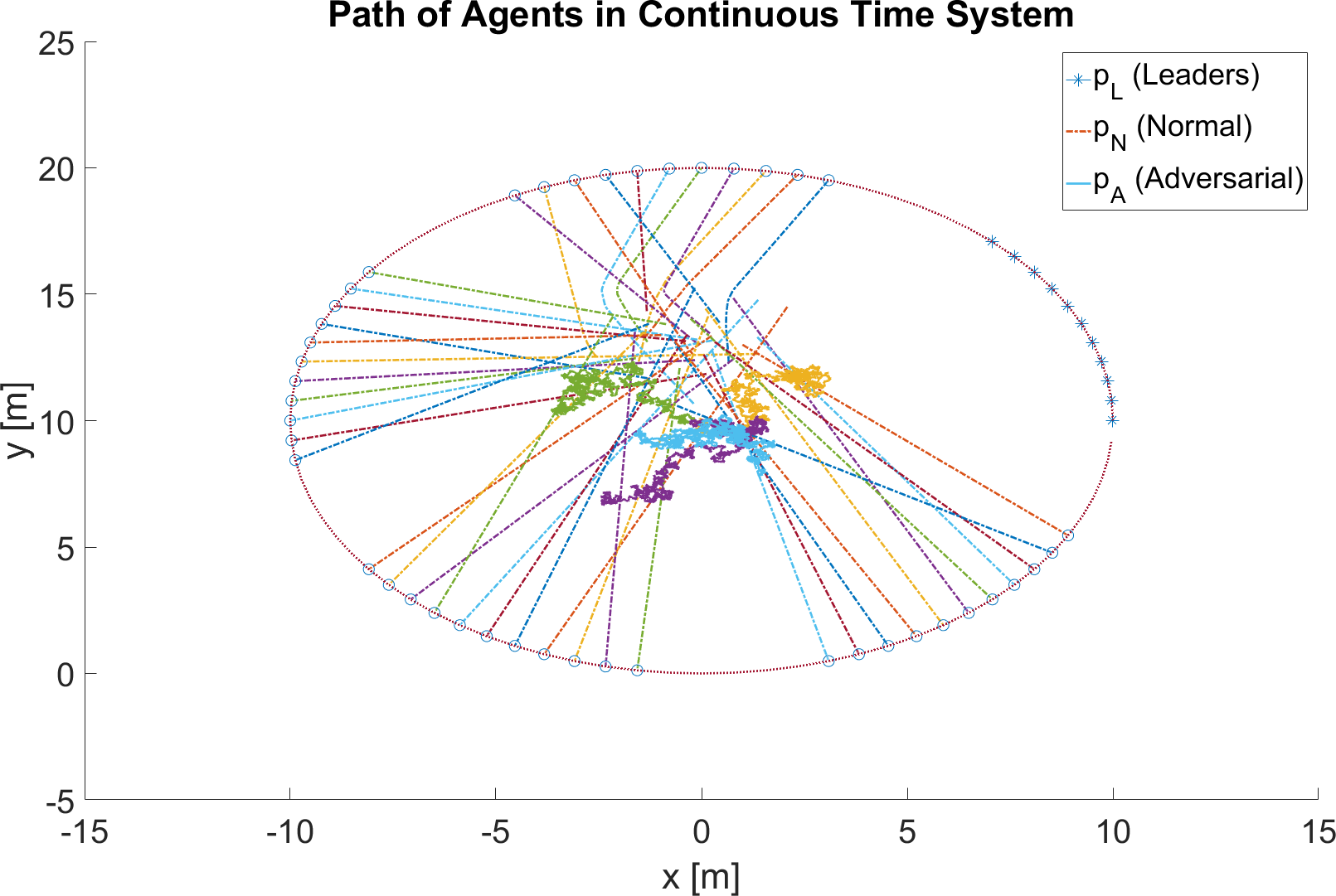}
	\caption{Path of the agents in the continuous time case. All normal and misbehaving agents start from the centre of the circle marked by red dots. The leaders are denoted by the star points $\textrm p_{\textrm L}$,
non-adversarial agents are denoted by $\textrm p_{\textrm N}$ and the adversarial agents are denoted as $\textrm p_{\textrm A}$.
}
	\label{fig:path cont}
\end{figure}

\begin{figure}[!htbp]
	\centering
	\includegraphics[width=0.9\columnwidth,clip]{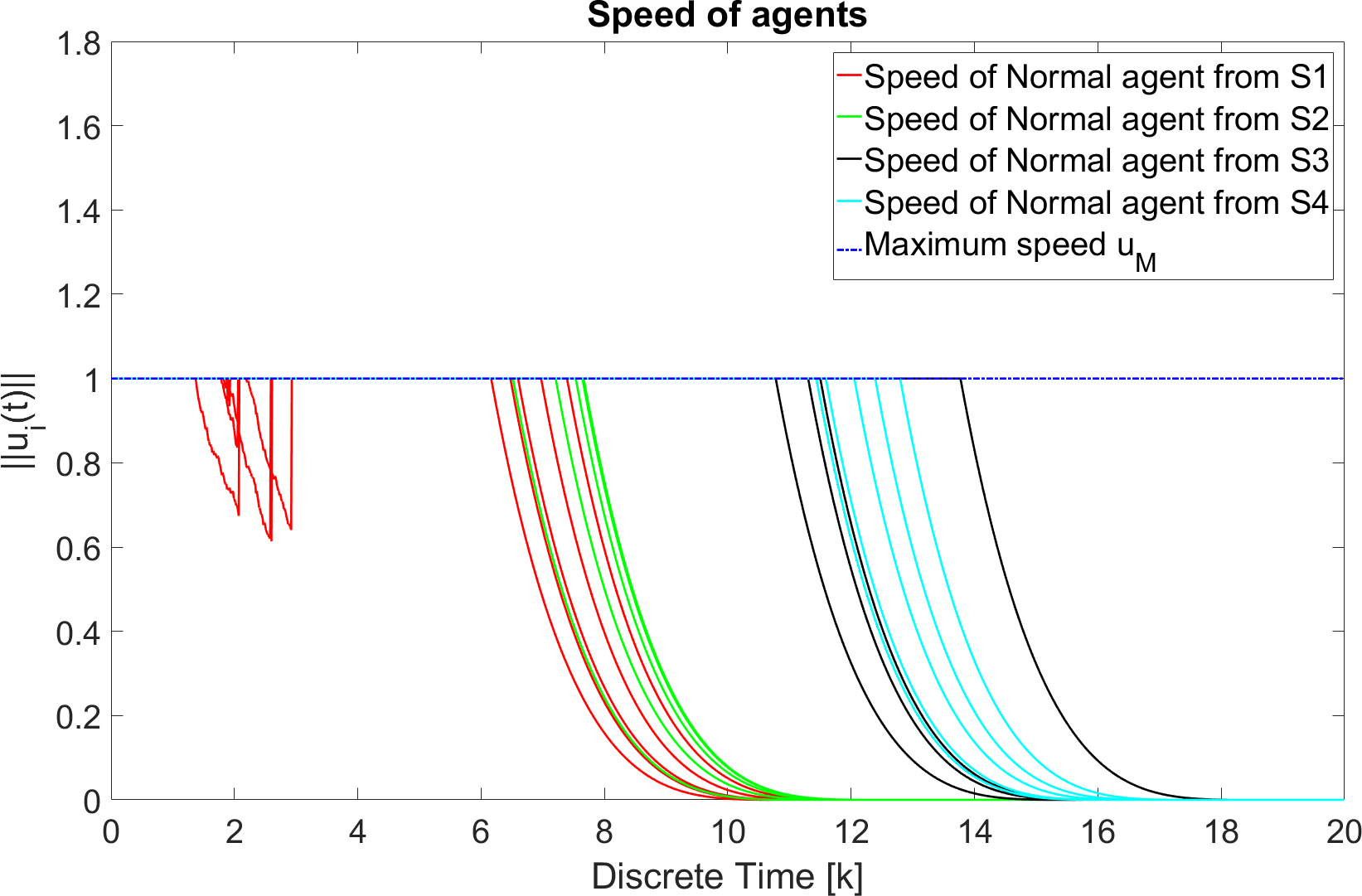}
	\caption{Norm $\|\bm u_i(t)\|$ of a subset of the normal agents, demonstrating that their input magnitudes never exceed the bound $u_M =1$. The rest of the network is not shown for sake of clarity.
	}
	\label{fig:speed cont}
\end{figure}

For the case of discrete system, Figure \ref{fig:norm disc} shows the variation of $\|\bm \tau_i[k]-\bm \tau_L\|$ with number of steps. Figure \ref{fig:path disc} shows the path $\bm p_i[k] = \bmx{x_i[k] & y_i[k]}^T$ of the agents. From both the figures, it is clear that despite the 5-local adversarial model, each normal agent achieves the desired formation.

From Figures \ref{fig:norm cont} and \ref{fig:norm disc}, it can be seen that agents in $\Sc_l$ converge before agents in $\Sc_{l+1}$, which is consistent with our analysis for this particular worst-case scenario.

\begin{figure}[!htbp]
	\centering
	\includegraphics[width=0.9\columnwidth,clip]{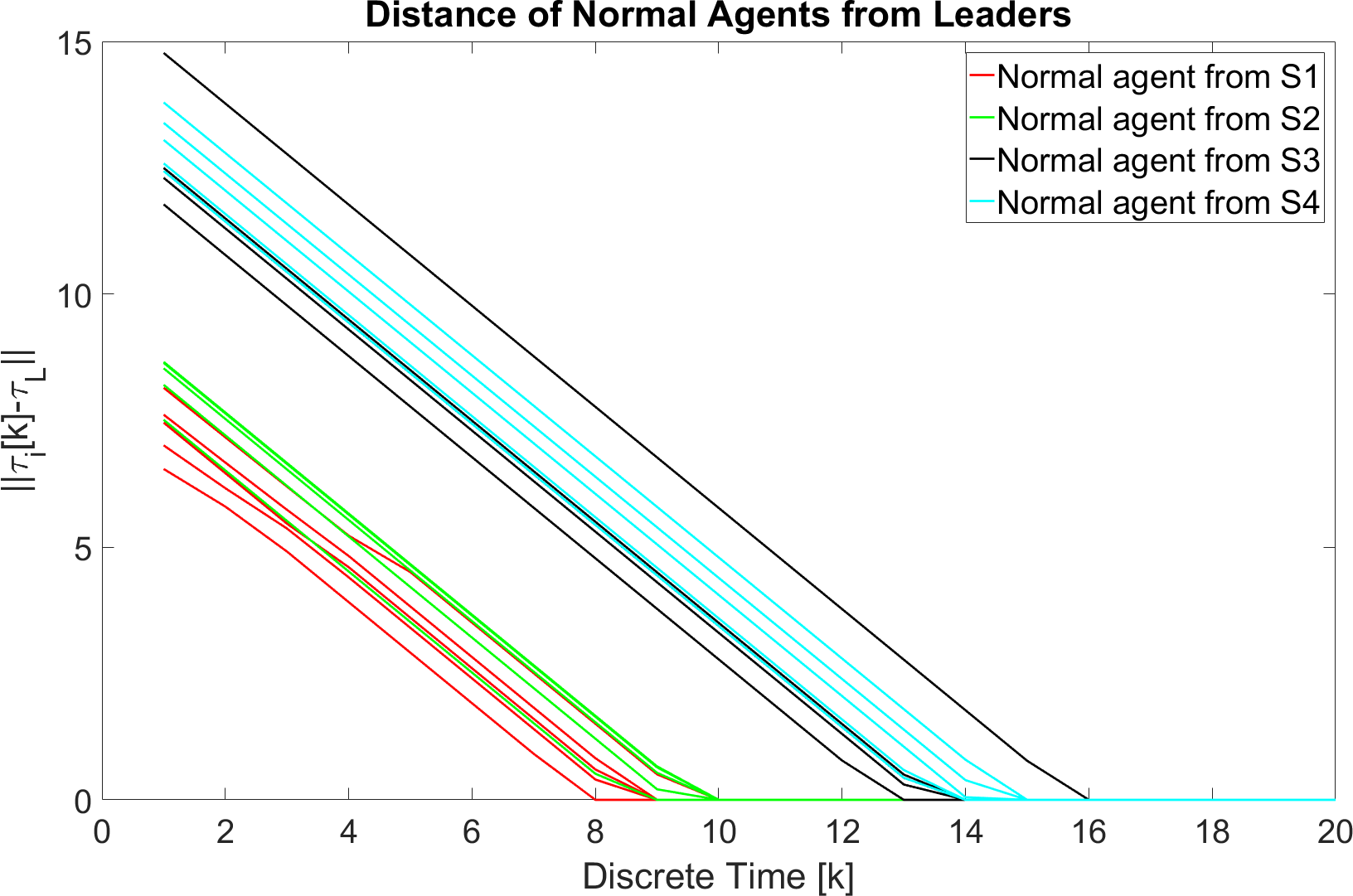}
	\caption{Norm $\|\bm \tau_i[k]-\bm \tau_L\|$ of a subset of the normal agents in the continuous time case. For sake of clarity, only a few normal nodes from each set $\Sc_p$ are shown.}
	\label{fig:norm disc}
\end{figure}

\begin{figure}[!htbp]
	\centering
	\includegraphics[width=0.9\columnwidth,clip]{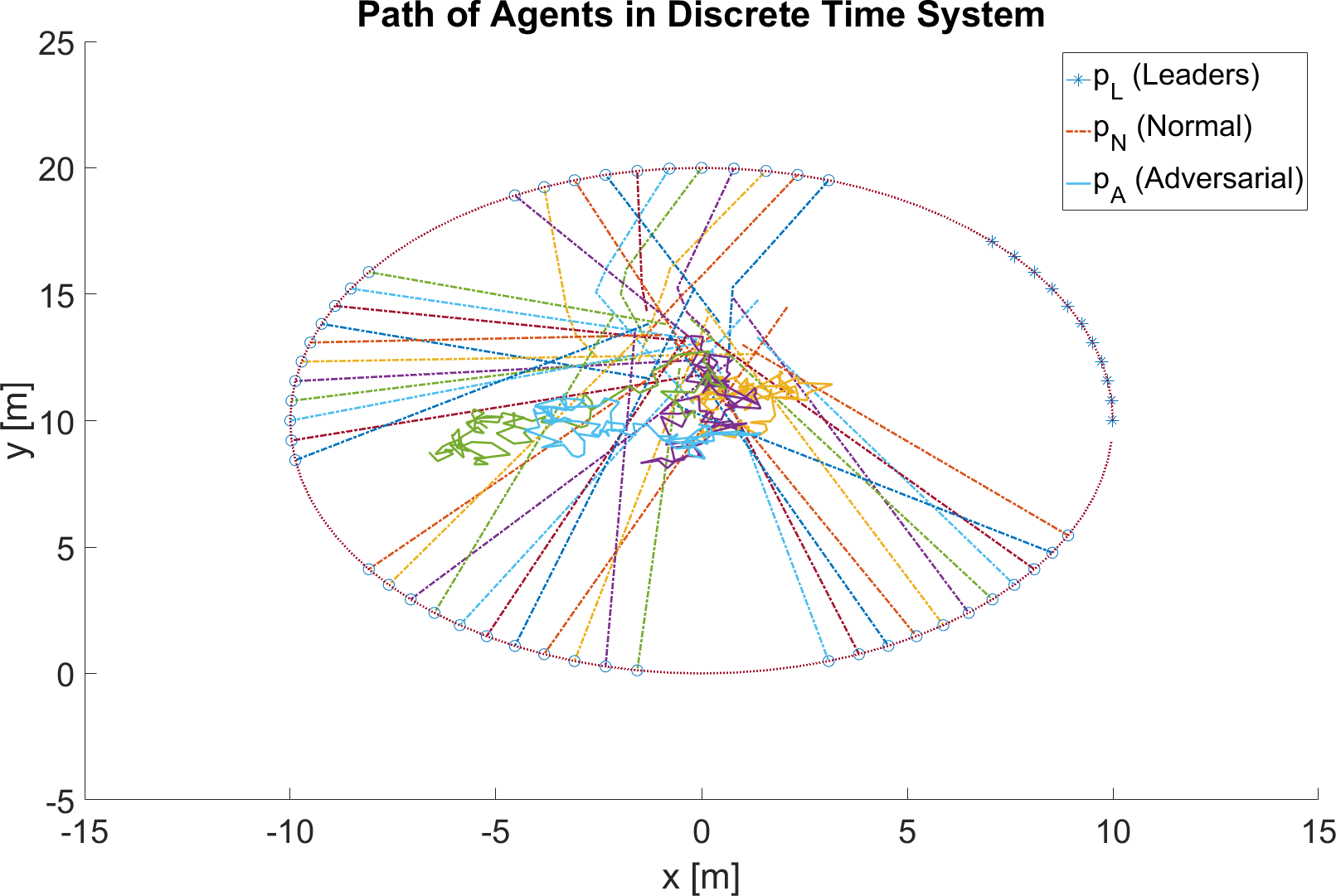}
	\caption{Path of the agents in the discrete time case. The leaders are denoted by the star points $\textrm p_{\textrm L}$,
non-adversarial agents are denoted by $\textrm p_{\textrm N}$ and the adversarial agents are denoted as $\textrm p_{\textrm A}$.
}
	\label{fig:path disc}
\end{figure}

\begin{figure}[!htbp]
	\centering
	\includegraphics[width=0.9\columnwidth,clip]{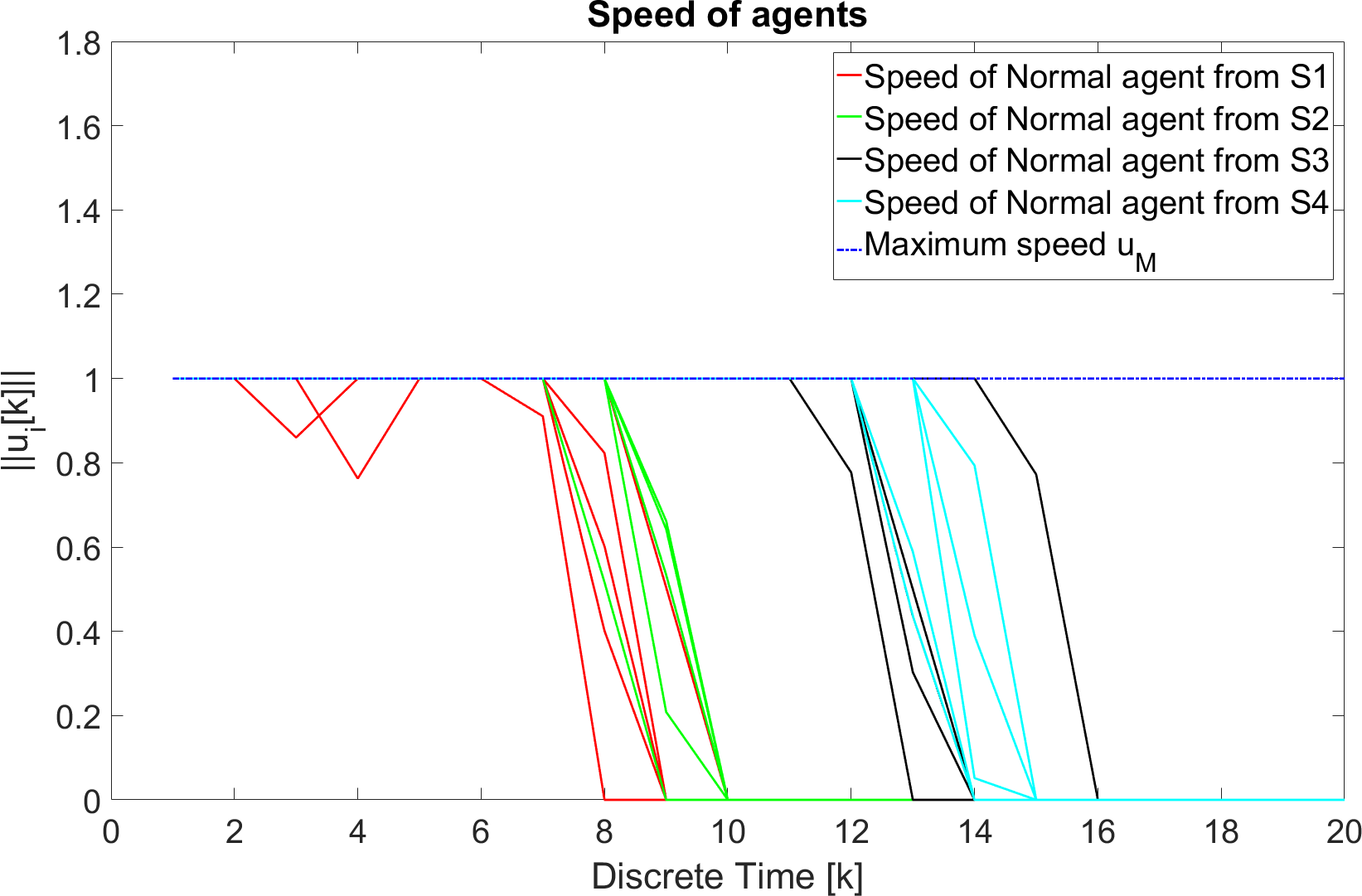}
	\caption{Norm $\|\bm u_i[k]\|$ of a subset of the normal agents in the discrete time case. 
 Again, the magnitude of each agents' control input never exceeds the bound $u_M = 1$ and goes to zero as the agents converge to formation.
	}
	\label{fig:speed disc}
\end{figure}

\section{Conclusion}
\label{sec:conc}

In this paper we introduced a novel continuous time resilient controller which guarantees that normally behaving agents can converge to a formation with respect to a set of leaders in the presence of adversarial agents. We proved that even with bounded inputs, the controller guarantees convergence in finite-time. In addition, we also applied our filtering mechanism to a discrete-time system and showed that it guarantees exponential convergence of agents to formation in the presence of adversaries under bounded inputs. Future work in this area will include further analysis of establishing safety among the normal agents and extending our results to time-varying graphs.



\bibliographystyle{IEEEtran}

\bibliography{Mendeley.bib}


\end{document}